\documentclass[11pt]{article}
\usepackage[T1]{fontenc}
\usepackage{amsmath,amssymb,amsthm,mathtools,geometry,booktabs}
\usepackage[unicode]{hyperref}
\usepackage{xcolor}
\newcommand{\HH}{\mathrm H}
\newcommand{\II}{\mathrm I}

\newcommand{\ZZ}{\mathbb Z}

\newcommand{\supp}{\operatorname{supp}}
\newcommand{\unif}{\mathrm{unif}}

\newcommand{\cG}{\mathcal G}

\newtheorem{theorem}{Theorem}
\newtheorem{lemma}[theorem]{Lemma}
\newtheorem{proposition}[theorem]{Proposition}
\newtheorem{corollary}[theorem]{Corollary}
\newtheorem{definition}{Definition}
\theoremstyle{remark}
\newtheorem{remark}[theorem]{Remark}

\numberwithin{theorem}{section}
\numberwithin{definition}{section}
\usepackage{authblk}

\title{From an entropic inequality for sums in abelian groups\\
to the theorems of Ruzsa and Pl\"unnecke}

\author[1]{Chandra Nair}
\author[2]{Aref Namayandeh}
\author[3]{Zhaobang Zhu}

\affil[1,2,3]{Department of Information Engineering, CUHK, Hong Kong}

\date{\today}

\date{\today}

\begin{document}
\maketitle

\begin{abstract}
We give a direct information-theoretic proof of the
Pl\"unnecke--Ruzsa inequalities for finite subsets of abelian groups.
The argument rests on an elementary coupling inequality proved by
maximizing joint entropy, without tensor powers or asymptotic
constructions. It also yields Ruzsa's sumset triangle inequality
and an extension to non-abelian groups.
\end{abstract}

\section{Introduction}\label{sec:intro}

Inequalities for sumsets and inequalities for entropies measure the same
thing --- growth under addition --- in two different languages. Ruzsa
\cite{ruzsa09} proposed a ``theory of translation'' between them, and by
now there is a substantial body of work developing this dictionary: the
entropic Ruzsa calculus of Tao \cite{tao10}, the partition-determined
inequalities of Madiman, Marcus and Tetali \cite{mmt12}, the
differential-entropy sumset inequalities of Kontoyiannis and Madiman\cite{km012}, the
recent refinements by Green, Manners and Tao \cite{gmt24}, and the exact equivalence
theorems of Ruzsa \cite{ruzsa09} and of Lau and Nair \cite{ln25}, the
latter enlarging the translatable families via Sanov's theorem and
identifying maximal couplings of sums as the natural entropic counterpart
of sumset cardinalities. This contrasts with many earlier approaches, in which sums of independent random variables served as analogues of sumset inequalities. One main motivation of the research is that inequalities for independent random variables often do not imply their additive-combinatorial counterparts.

We refer to the survey \cite{nair26} for the state
of this programme and for references to the large body of work that is not cited here.

Gaps remain in the connections between these fields. 
For example, cornerstone cardinality inequalities, including Ruzsa’s sumset triangle inequality
\[
|A|\,|B+C|\leq |A+B|\,|A+C|,
\] 
Pl\"unnecke's inequality \cite{plunnecke70} and Ruzsa's generalization \cite{ruzsa09} on magnification ratios,
originally proved using graph-theoretic methods and later reproved by Petridis
\cite{petridis12}, were translatable into entropic formulation only through the
equivalence theorems, whose typical-set arguments are asymptotic and
require a torsion-free ambient group. In 2026, Lau, Nair, and Zhu \cite{lnz26} used ideas from convex optimization to give a direct information-theoretic proof of a maximal coupling form of Shearer's lemma for sums of distributions, which implies a corresponding inequality in additive combinatorics. In the same paper, a maximal-coupling entropic analogue of the displayed inequality was explicitly
conjectured for general abelian groups.

We prove the conjectured entropic analogue by establishing a stronger inequality. For any joint distribution \(p_{X,Y,Z}\) with marginals \(p_X,p_Y,p_Z\),
\begin{equation}
\HH_p(X)+\HH_p(X+Y+Z)
\le
\max_{q\in\Pi(p_X,p_Y)}\HH_q(X+Y)
+
\max_{r\in\Pi(p_X,p_Z)}\HH_r(X+Z).
\label{eq:motiv}
\end{equation}
Using this maximal-coupling inequality, we formulate and prove the entropic analogues of the Pl\"unnecke--Ruzsa inequality for magnification ratios. Together with the entropic characterization of the magnification ratio in \cite{ln25}, these yield, in a few lines, Ruzsa's sumset triangle inequality (Theorem~\ref{thm:ruzsa}), and the
different-summand Pl\"unnecke--Ruzsa inequality
(Theorem~\ref{thm:ruzsa-lemma}), whose specialization is Pl\"unnecke's inequality (Theorem~\ref{thm:plunnecke}). These results make the connections between the entropic and additive-combinatorial statements explicit. We also establish a sumset analogue of the displayed maximal-coupling inequality (Proposition~\ref{prop:abc}). Finally, carrying out the same argument with the ordering \(U=Y\circ X\circ Z\) yields a new sumset inequality for arbitrary, possibly non-abelian groups (Corollary~\ref{cor:nonabelian-combinatorial}), recovering
Ruzsa's sumset triangle inequality in the abelian case.

\emph{Outline.} Subsection~\ref{sec:prelim} introduces the notation and elementary
facts used throughout the paper.
Section~\ref{sec:main} states and proves the main coupling inequality and its entropic consequences. Section~\ref{sec:ac} transfers them to sumset inequalities. Section~\ref{sec:nonab} gives the non-abelian extension.

\subsection{Notation and elementary facts} \label{sec:prelim}
Unless otherwise stated, $G$ is an abelian group. We write $+$ for the operation in abelian groups and $\circ$ for the operation in general groups. The sets $A,B,C,D\subseteq G$ are finite
and nonempty, and $A+B:=\{a+b:a\in A,\ b\in B\}$ is the sumset of $A$ and $B$.
We write $\supp p_X$ for the support of $p_X$. For distributions $p_X$ and $p_Y$ on finite
alphabets, let $\Pi(p_X,p_Y)$ denote the convex and compact set of
couplings, i.e., joint distributions of $(X,Y)$ with $X\sim p_X$ and
$Y\sim p_Y$. The magnification ratio of $(A,D)$ is defined by
$\mu(A,D):=\min_{\emptyset\ne S\subseteq A}|S+D|/|S|$. For given $\supp p_X\subseteq A$, define
$\Phi_D(p_X):=
\max_{{W_{Y\mid X}}}
\bigl[\HH(X+Y)-\HH(X)\bigr]$,
where we consider $p_X,W_{Y|X}$ as an input distribution and a channel respectively, with the induced marginal $\supp p_Y \subseteq D$.

Throughout, we use the concavity of $\HH(X)$ in $p_X$, the bounds
$0\le\HH(X)\le\log|\supp p_X|$, the chain rule, subadditivity
$\HH(X_1+X_2)\le\HH(X_1,X_2)\le\HH(X_1)+\HH(X_2)$, non-negativity of
conditional mutual information, the fact that conditioning cannot
increase entropy, and invariance of entropy under bijections of finite
alphabets.


\section{The main coupling inequality}\label{sec:main}
In this section, we prove our main coupling inequality and use it to derive other non-trivial entropic inequalities. The next theorem strengthens \eqref{eq:motiv} and directly yields our desired max-coupling sum inequality.

\begin{theorem}[max-coupling entropy inequality]\label{thm:three}
Let $p_{X,Y,Z}$ be a distribution on $A \times B \times C$ and let $U=X+Y+Z$. 
Define 
\[
\mathcal{M}(p_X,p_Y,p_Z,p_U) := \{q_{X,Y,Z}:\ q_X=p_X,\ q_Y=p_Y,\ q_Z=p_Z,\ q_U=p_U\},
\]
the set of couplings with marginals $p_X,p_Y,p_Z$ and sum distribution $p_U$. Then

\begin{subequations}
\begin{align}
\HH(X)+\HH_p(X+Y+Z) &\le \max_{q \in \mathcal{M}} \big(\HH(X+Y)+\HH(X+Z)\big) \label{eq:three-max-sum}\\
&\le \max_{\Pi(p_X,p_Y, p_Z)} \big(\HH(X+Y)+\HH(X+Z)\big) \label{eq:three-coupling} \\ 
&\le \max_{\Pi(p_X,p_Y)}\HH(X+Y)+\max_{\Pi(p_X,p_Z)}\HH(X+Z) \label{eq:three}.
\end{align}
\end{subequations}

\end{theorem}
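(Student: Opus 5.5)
Inequalities \eqref{eq:three-coupling} and \eqref{eq:three} are immediate, so the plan treats them first and spends all the effort on \eqref{eq:three-max-sum}. Since $\mathcal M(p_X,p_Y,p_Z,p_U)\subseteq\Pi(p_X,p_Y,p_Z)$, the maximum over the larger set can only be larger, which gives \eqref{eq:three-coupling}. For any triple coupling, the pair marginals $q_{X,Y}\in\Pi(p_X,p_Y)$ and $q_{X,Z}\in\Pi(p_X,p_Z)$ bound each summand by its own maximum, which gives \eqref{eq:three}.

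For \eqref{eq:three-max-sum} I would maximize joint entropy. The set $\mathcal M$ is a nonempty (it contains $p$), convex, compact polytope, and $\HH(X,Y,Z)$ is strictly concave, so there is a unique maximizer $q^\ast\in\mathcal M$. The Lagrange/KKT conditions for maximizing entropy under linear marginal constraints give a Gibbs (iterative-scaling) form
\[
q^\ast(x,y,z)=\alpha(x)\beta(y)\gamma(z)\delta(x+y+z),
\]
valid on the support of $q^\ast$. The support itself is the maximal support among elements of $\mathcal M$, handled by the usual face argument. From this form I get the Pythagorean identity
\[
\HH_{q^\ast}(X,Y,Z)-\HH_r(X,Y,Z)=D(r\,\|\,q^\ast)\ \ge 0 \qquad\text{for every } r\in\mathcal M .
\]
It holds because $\log q^\ast$ is a sum of functions of $X,Y,Z,U$ separately, and every $r\in\mathcal M$ shares these marginals. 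The theorem will be proved at $q=q^\ast$.

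Write $V=X+Y$ and $W=X+Z$, so $U=V+W-X$. Knowing $(V,W)$ and either of $X$ or $U$ recovers $(X,Y,Z)$. Subadditivity and the chain rule then give
\[
\HH(V)+\HH(W)\ \ge\ \HH(V,W)=\HH(X,Y,Z)-\HH(X\mid V,W)=\HH(X,Y,Z)-\HH(U\mid V,W).
\]
It therefore suffices to show that, under $q^\ast$,
\[
\HH(X,Y,Z)\ \ge\ \HH(X)+\HH(U)+\HH(U\mid V,W),
\]
equivalently $\HH(Y,Z\mid X,U)\ \ge\ \HH(U\mid V,W)-\II(X;U)$.

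To obtain this I would build a competitor $r\in\mathcal M$ by a "swap" construction and invoke the maximality of $q^\ast$, via the Pythagorean identity, against it. The construction draws $(X,U)$ and then resamples the remaining coordinates through an independent copy glued along $(V,W)$. Concretely, take two conditionally independent copies of $(X,Y,Z)$ sharing $(V,W)$, and read off a new triple in which $X$ comes from one copy and $U$ from the other. Abelianness enters here: the new $Y,Z$ are defined by $V-X$ and $W-X$, and their marginals must still equal $p_Y,p_Z$. One must check that the product form forces the $X$- and $U$-marginals to be preserved. Comparing $\HH_r$ with $\HH_{q^\ast}$ via the chain rule should produce exactly the extra $\HH(U\mid V,W)$ term.

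The main obstacle is this last step: designing the competitor coupling so that it provably lies in $\mathcal M$, with all four marginals preserved, and so that its entropy expands to $\HH(X)+\HH(U)+\HH(U\mid V,W)$ up to a nonnegative mutual-information slack. If the gluing fails to preserve marginals, I would fall back on verifying the inequality directly from the Gibbs form. Conditioned on $(V,W)=(v,w)$, the weights are $\alpha(x)\beta(v-x)\gamma(w-x)\delta(v+w-x)$, and I would bound the resulting conditional entropies termwise.
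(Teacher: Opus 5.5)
\paragraph{Verdict: there is a genuine gap.} Your treatment of \eqref{eq:three-coupling} and \eqref{eq:three}, and your choice to work at the unique joint-entropy maximizer $q^\ast\in\mathcal M$, agree with the paper. The problem is the reduction for \eqref{eq:three-max-sum}. After the step $\HH(V)+\HH(W)\ge\HH(V,W)$, the inequality you set out to prove, $\HH(X,Y,Z)\ge\HH(X)+\HH(U)+\HH(U\mid V,W)$, is equivalent to $\HH(V,W)\ge\HH(X)+\HH(U)$. This is false even at $q^\ast$.

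Take $Y\equiv y_0$ and $Z\equiv z_0$ constant and $X$ non-degenerate. Then $\mathcal M=\{p\}$, so $q^\ast=p$, and $V,W,U$ are translates of $X$. This gives $\HH(V,W)=\HH(X)<2\HH(X)=\HH(X)+\HH(U)$. The theorem holds here with equality, and the whole margin is the term $\II(V;W)=\HH(X)$ that subadditivity throws away. So no competitor coupling can close this route, and neither can a termwise bound from the Gibbs form (your fallback).

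There are two smaller problems. Your ``equivalent'' form has a sign error; it should read $\HH(Y,Z\mid X,U)\ge\HH(U\mid V,W)+\II(X;U)$. The swap construction is also incoherent: once $(V,W)$ is fixed, $U=V+W-X$, so you cannot take $X$ from one copy and $U$ from the other while keeping both $V$ and $W$.

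\paragraph{What is missing.} You need to glue along a single partial sum, and to replace subadditivity by an exact identity. Form $q^\ast_V\,q^\ast_{X\mid V}\,q^\ast_{U\mid V}$ and set $Y=V-X$, $Z=U-V$. The laws of $(X,V)$ and $(V,U)$ are unchanged, hence so are those of $X,Y,Z,U$, so this coupling lies in $\mathcal M$. Its joint entropy is $\HH_{q^\ast}(X,Y,Z)+\II_{q^\ast}(X;U\mid V)$, so maximality forces $\II_{q^\ast}(X;U\mid V)=0$. Symmetrically, $\II_{q^\ast}(X;U\mid W)=0$.

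Since any three of $X,V,W,U$ determine the fourth, all four triples have joint entropy $\HH(X,Y,Z)$. Expanding the conditional mutual informations then gives
\[
\HH(V)+\HH(W)-\HH(X)-\HH(U)=\II(V;W\mid X)+\II(V;W\mid U)-\II(X;U\mid V)-\II(X;U\mid W).
\]
At $q^\ast$ the last two terms vanish, which yields \eqref{eq:three-max-sum}. Neither the Gibbs form nor the Pythagorean identity is needed; maximality against this one recombined coupling suffices.
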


\begin{proof}
Fix the coupling $p_{X,Y,Z}\in\Pi(p_X,p_Y,p_Z)$ and set $U=X+Y+Z$, $R=X+Y$, $S=X+Z$; let $p_U$
be the induced distribution of $U$. The set $\mathcal M$ is a nonempty, convex and compact subset of
$\Pi(p_X,p_Y,p_Z)$. Since $\HH_q(X,Y,Z)$ is strictly concave in $q$, it has a unique maximizer $q^\star\in\mathcal M$.

Construct a distribution on $(X,R,U)$ by the Markov recombination
$q^\star_R\,q^\star_{X\mid R}\,q^\star_{U\mid R}$, and let
$q^{(R)}_{X,Y,Z}$ be the joint distribution it induces via $Y=R-X$, $Z=U-R$. The
pairwise marginals satisfy
\[
q^\star_{X,R}=q^{(R)}_{X,R},\qquad q^\star_{R,U}=q^{(R)}_{R,U};
\]
since $Y=R-X$ and $Z=U-R$ are functions of these pairs, all one-dimensional marginals of $X,Y,Z,U$ match, so
$q^{(R)}_{X,Y,Z}\in\mathcal M$. Moreover, $(X,Y,Z)$ and $(X,R,U)$ are in bijection, so
\[
\HH_{q^{(R)}}(X,Y,Z)=\HH_{q^{(R)}}(X,R,U)
=\HH_{q^{(R)}}(R)+\HH_{q^{(R)}}(X\mid R)+\HH_{q^{(R)}}(U\mid R).
\]
As the pairwise marginals agree, this rewrites as
\[
\begin{aligned}
\HH_{q^{(R)}}(X,Y,Z)
&=\HH_{q^\star}(R)+\HH_{q^\star}(X\mid R)
+\HH_{q^\star}(U\mid R)\\
&=\HH_{q^\star}(X,Y,Z)+\II_{q^\star}(X;U\mid R).
\end{aligned}
\]
Maximality of $\HH_{q^\star}(X,Y,Z)$ and non-negativity of $\II(X;U\mid R)$ forces $\II_{q^\star}(X;U\mid R)=0$.

By the symmetric Markov recombination on $(X,S,U)$, also
$\II_{q^\star}(X;U\mid S)=0$. In the abelian setting the tuples $(X,Y,Z)$, $(X,R,U)$, $(X,S,U)$,
$(R,X,S)$, $(R,U,S)$ are in bijection with one another and hence share the same joint entropy. Therefore, for any joint distribution $q_{X,Y,Z}$, chain rule yields
\[
\HH_q(R)+\HH_q(S)-\HH_q(X)-\HH_q(U)
=\II_q(R;S\mid X)+\II_q(R;S\mid U)-\II_q(X;U\mid R)-\II_q(X;U\mid S).
\]
Evaluating the above equality at $q^\star$, the last two terms vanish, so
\[
\HH_{q^\star}(R)+\HH_{q^\star}(S)-\HH_{q^\star}(X)-\HH_{q^\star}(U)
=\II_{q^\star}(R;S\mid X)+\II_{q^\star}(R;S\mid U)\ \ge\ 0.
\]
Since $q^\star_X=p_X$, $q^\star_U=p_U$, and $q^{\star} \in \mathcal{M}$, we conclude
\[
\HH(X)+\HH(X+Y+Z)\ \le\
\max_{q \in \mathcal{M}} \big(\HH(X+Y)+\HH(X+Z)\big),
\]
which is \eqref{eq:three-max-sum}. Inequality
\eqref{eq:three-coupling} follows from
$\mathcal M\subseteq\Pi(p_X,p_Y,p_Z)$, and \eqref{eq:three} follows
by bounding each term by its corresponding pairwise maximum.
\end{proof}

\begin{remark}\label{rem:abelianimp}
The second recombination uses commutativity: it preserves the distributions of
$(X,X+Z)$ and of $(X+Z,X+Y+Z)$, and in a non-abelian group this need not
preserve the distribution of $Y$. The correct non-abelian ordering is given in
Section~\ref{sec:nonab}.
\end{remark}



\begin{corollary}[general max-coupling entropy inequality]
\label{cor:k-coupling}
Let $k\ge1$ and let
$(X,Y_1,\dots,Y_k)\sim p$ be a distribution on
$A\times\prod_{i=1}^k B_i$, with marginals
$p_X,p_{Y_1},\dots,p_{Y_k}$. Then
\[
\HH_p(X+Y_1+\cdots+Y_k)-\HH(X)
\le
\sum_{i=1}^k
\left[
\max_{\Pi(p_X,p_{Y_i})}\HH(X+Y_i)
-\HH(X)
\right].
\]
\end{corollary}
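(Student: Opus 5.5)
We argue by induction on $k$, using inequality \eqref{eq:three} of Theorem~\ref{thm:three} at each step. For $k=1$ the claim reads $\HH_p(X+Y_1)\le\max_{\Pi(p_X,p_{Y_1})}\HH(X+Y_1)$, which holds because $p_{X,Y_1}\in\Pi(p_X,p_{Y_1})$.

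For $k\ge2$, set $Y=Y_1+\cdots+Y_{k-1}$ and $Z=Y_k$, and view $(X,Y,Z)$ as jointly distributed under $p$. The variable $Y$ takes values in the finite set $B_1+\cdots+B_{k-1}$, so Theorem~\ref{thm:three} applies with $B$ replaced by this sumset. It gives
\[
\HH(X)+\HH_p(X+Y_1+\cdots+Y_k)\le \max_{\Pi(p_X,p_Y)}\HH(X+Y)+\max_{\Pi(p_X,p_{Y_k})}\HH(X+Y_k).
\]
Subtracting $2\HH(X)$ from both sides, it remains to bound $\max_{\Pi(p_X,p_Y)}\HH(X+Y)-\HH(X)$ by $\sum_{i<k}\bigl[\max_{\Pi(p_X,p_{Y_i})}\HH(X+Y_i)-\HH(X)\bigr]$. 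This is where the main obstacle lies. The maximizing coupling $q_{X,Y}$ couples $X$ only with the sum $Y$, while the induction hypothesis needs a joint law of $(X,Y_1,\dots,Y_{k-1})$.

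We handle this by lifting $q_{X,Y}$ to such a joint law. Take the conditional law $p_{Y_1,\dots,Y_{k-1}\mid Y}$ from $p$, and define
\[
\tilde q_{X,Y_1,\dots,Y_{k-1}}=\sum_{y} q_{X,Y}(\cdot,y)\,p_{Y_1,\dots,Y_{k-1}\mid Y=y}.
\]
Under $\tilde q$ the relation $Y=Y_1+\cdots+Y_{k-1}$ holds almost surely. Since $q_Y=p_Y$, the law of $(Y_1,\dots,Y_{k-1})$ under $\tilde q$ equals its law under $p$, so each marginal is $p_{Y_i}$. We also have $\tilde q_X=p_X$ and $\tilde q_{X,Y}=q_{X,Y}$. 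Applying the induction hypothesis (the statement for $k-1$) to $\tilde q$ then gives
\[
\HH_q(X+Y)-\HH(X)=\HH_{\tilde q}(X+Y_1+\cdots+Y_{k-1})-\HH(X)\le\sum_{i<k}\Bigl[\max_{\Pi(p_X,p_{Y_i})}\HH(X+Y_i)-\HH(X)\Bigr].
\]
The right-hand side depends only on the marginals $p_X$ and $p_{Y_i}$, so it is the same whether computed from $p$ or from $\tilde q$.

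Combining the two displays completes the induction. The only care needed is to state the induction hypothesis for arbitrary joint distributions with the given marginals; it is the lifting step above that makes this hypothesis applicable to the maximizer $q$.
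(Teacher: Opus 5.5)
Your proposal is correct and follows essentially the same route as the paper. Both argue by induction via \eqref{eq:three}, grouping all but one summand into a single variable and lifting the maximizing coupling through the conditional law taken from $p$ so the induction hypothesis applies. The only cosmetic difference is that you split off $Y_k$ where the paper splits off $Y_1$.
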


\begin{proof}
Induct on $k$, the case $k=1$ being immediate. For $k\ge2$, put
$T=Y_2+\cdots+Y_k$. By \eqref{eq:three},
\[
\HH_p(X+Y_1+T)-\HH(X)
\le
\max_{\Pi(p_X,p_{Y_1})}\HH(X+Y_1)
+\max_{\Pi(p_X,p_T)}\HH(X+T)-2\HH(X).
\]
Every coupling of $p_X,p_T$ extends via
$p_{Y_2,\dots,Y_k\mid T}$ to a coupling with the prescribed
marginals and $T=Y_2+\cdots+Y_k$. Applying the induction
hypothesis to each such extension bounds the second maximum by
\[
\HH(X)+\sum_{i=2}^k
\left[\max_{\Pi(p_X,p_{Y_i})}\HH(X+Y_i)-\HH(X)\right],
\]
which gives the claim.
\end{proof}

\begin{corollary}\label{cor:lnz}
Let $p_X,p_Y,p_Z$ be distributions on $A,B,C$, respectively, and let $(X,Y,Z)\sim p$ be any coupling with these marginals. Then
\begin{equation}\label{eq:two}
\HH(X)+\HH_p(Y+Z)\ \le\
\max_{\Pi(p_X,p_Y)}\HH(X+Y)+\max_{\Pi(p_X,p_Z)}\HH(X+Z).
\end{equation}
Consequently,
\begin{equation}\label{eq:two-max}
\HH(X)+\max_{\Pi(p_Y,p_Z)}\HH(Y+Z)\ \le\
\max_{\Pi(p_X,p_Y)}\HH(X+Y)+\max_{\Pi(p_X,p_Z)}\HH(X+Z).
\end{equation}

\end{corollary}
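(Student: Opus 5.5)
The plan is to derive \eqref{eq:two} from Theorem~\ref{thm:three} via \eqref{eq:three}. The only gap is that $\HH_p(Y+Z)$ must be compared with $\HH(X+Y+Z)$ under a suitable coupling. The key observation is that we are free to choose how $X$ is coupled with $(Y,Z)$. We will take $X$ independent of $(Y,Z)$: define $\tilde p_{X,Y,Z}:=p_X\,p_{Y,Z}$, which is again a coupling with marginals $p_X,p_Y,p_Z$.

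Under $\tilde p$, independence gives
\[
\HH_{\tilde p}(X+Y+Z)\ \ge\ \HH_{\tilde p}(X+Y+Z\mid X)=\HH_{\tilde p}(Y+Z\mid X)=\HH_p(Y+Z).
\]
Here we use that conditioning cannot increase entropy, that translation by $X=x$ is a bijection, and that $(Y,Z)$ is independent of $X$ with law $p_{Y,Z}$. We then apply \eqref{eq:three} of Theorem~\ref{thm:three} to $\tilde p$. Its right-hand side depends only on the marginals $p_X,p_Y,p_Z$, so it is exactly the right-hand side of \eqref{eq:two}. This yields
\[
\HH(X)+\HH_p(Y+Z)\le \HH(X)+\HH_{\tilde p}(X+Y+Z)\le \max_{\Pi(p_X,p_Y)}\HH(X+Y)+\max_{\Pi(p_X,p_Z)}\HH(X+Z).
\]

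For \eqref{eq:two-max}, note that \eqref{eq:two} holds for every $p\in\Pi(p_Y,p_Z)$, each of which extends to a three-way coupling, for instance by the independent extension above. The right-hand side of \eqref{eq:two} does not depend on the choice of $p_{Y,Z}$. Taking the maximum over $\Pi(p_Y,p_Z)$, which is attained by compactness and continuity of entropy, gives \eqref{eq:two-max}.

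No step should be a serious obstacle. The only point needing care is the choice of coupling: Theorem~\ref{thm:three} controls $X+Y+Z$ rather than $Y+Z$, and the independent coupling of $X$ is what makes $\HH(X+Y+Z)\ge\HH(Y+Z)$ valid. Under an arbitrary coupling this inequality can fail; for example, $X=-Y$ cancels $Y$.
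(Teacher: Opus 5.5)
Your proposal is correct and is essentially the paper's proof. Both extend $(Y,Z)$ by an independent $X\sim p_X$, use $\HH(X+Y+Z)\ge\HH(X+Y+Z\mid X)=\HH(Y+Z)$, and then apply \eqref{eq:three}. The only difference is the order: the paper applies this to a maximizing coupling of $(p_Y,p_Z)$ to obtain \eqref{eq:two-max} first and then deduces \eqref{eq:two}, whereas you prove \eqref{eq:two} for the given coupling and then maximize.
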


\begin{proof}
Note that if $W$ is independent of $X$, then
\begin{equation}\label{eq:conv}
\HH(X+W)\ge\HH(X+W\mid X)=\HH(W\mid X)=\HH(W).
\end{equation}
Take a coupling of $p_Y,p_Z$ maximizing $\HH(Y+Z)$ and extend it by an independent $X\sim p_X$. Applying \eqref{eq:conv} with $W=Y+Z$
and then \eqref{eq:three} gives \eqref{eq:two-max}. Since $\HH_p(Y+Z)$ is bounded by the corresponding coupling maximum, \eqref{eq:two} follows.
\end{proof}

\begin{remark}
Inequality \eqref{eq:two-max} was conjectured by Lau, Nair and Zhu \cite{lnz26} for distributions supported on finite subsets of abelian groups. We prove it here as a direct consequence of Theorem~\ref{thm:three}. For finitely generated torsion-free abelian groups, it was known via the equivalence theorem \cite{ln25} (see Corollary 2 and Remark 4 therein), but no direct proof avoiding the equivalence theorem was previously known.
\end{remark}

\begin{corollary}[entropic Ruzsa distance triangle inequality]
\label{cor:d} Let $p_X, p_Y, p_Z$ be supported on finite subsets of an abelian group $G$. Then
$d(p_X,p_Z)\le d(p_X,p_Y)+d(p_Y,p_Z)$, where
\[
d(p_X,p_Y):=\max_{\Pi(p_X,p_Y)}\HH(X+Y)-\tfrac12\HH(X)-\tfrac12\HH(Y).
\]
\end{corollary}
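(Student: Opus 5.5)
The triangle inequality will follow directly from Corollary~\ref{cor:lnz}, inequality \eqref{eq:two-max}, applied with the roles relabelled so that $Y$ is the ``pivot'' variable. Concretely, we invoke \eqref{eq:two-max} with $(X,Y,Z)$ replaced by $(Y,X,Z)$:
\[
\HH(Y)+\max_{\Pi(p_X,p_Z)}\HH(X+Z)\ \le\ \max_{\Pi(p_Y,p_X)}\HH(Y+X)+\max_{\Pi(p_Y,p_Z)}\HH(Y+Z).
\]
Since $G$ is abelian, $Y+X=X+Y$. Also, a coupling in $\Pi(p_Y,p_X)$ is the same object as one in $\Pi(p_X,p_Y)$ after swapping coordinates. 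Hence the first maximum on the right equals $\max_{\Pi(p_X,p_Y)}\HH(X+Y)$. The same remark shows that $d$ is symmetric, so $d(p_Y,p_Z)$ is unambiguous.

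Next we subtract $\HH(Y)+\tfrac12\HH(X)+\tfrac12\HH(Z)$ from both sides; all these quantities are finite. On the left this leaves $\max_{\Pi(p_X,p_Z)}\HH(X+Z)-\tfrac12\HH(X)-\tfrac12\HH(Z)=d(p_X,p_Z)$. On the right we split $\HH(Y)=\tfrac12\HH(Y)+\tfrac12\HH(Y)$. We attach $\tfrac12\HH(X)+\tfrac12\HH(Y)$ to the first maximum, giving $d(p_X,p_Y)$. We attach $\tfrac12\HH(Y)+\tfrac12\HH(Z)$ to the second, giving $d(p_Y,p_Z)$. This is exactly the claimed inequality.

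There is no real obstacle, since all the substance is in Theorem~\ref{thm:three} through Corollary~\ref{cor:lnz}. The only points that need checking are these. First, Corollary~\ref{cor:lnz} holds for an arbitrary assignment of the three marginals, which is true because its hypotheses are symmetric up to naming. Second, commutativity is used to identify $\HH(Y+X)$ with $\HH(X+Y)$.
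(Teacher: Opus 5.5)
Your proposal is correct and is essentially the paper's proof: the paper also applies \eqref{eq:two-max} with $Y$ as the pivot variable, there via a cyclic relabelling, and then rearranges. Your version just spells out the relabelling, the symmetry of $d$ from commutativity, and the split $\HH(Y)=\tfrac12\HH(Y)+\tfrac12\HH(Y)$, which the paper leaves implicit.
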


\begin{proof}
Apply Corollary~\ref{cor:lnz} with the three distributions cyclically permuted and
rearrange.
\end{proof}
For a distribution $p_X$ such that  $\supp(p_X) \subseteq A$ and $Y$ taking values in $D$, recall
\[\begin{aligned}
\Phi_{D}(p_X)&= \max_{W_{Y\mid X}} \HH(X+Y) - \HH(X).
\end{aligned}
\]

\begin{theorem}[entropic Pl\"unnecke--Ruzsa inequality]\label{thm:fi-pr}
For every $p_X$ on $A$ and all finite nonempty $B,C\subseteq G$,
\begin{equation}\label{eq:fi-pr}
\Phi_{B+C}(p_X)\ \le\ \Phi_B(p_X)+\Phi_C(p_X).
\end{equation}
\end{theorem}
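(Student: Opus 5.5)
We reduce the statement to inequality~\eqref{eq:three} of Theorem~\ref{thm:three}, taking $p_X$ as the common first marginal. Since $\Phi_{B+C}(p_X)$ is a maximum over channels $W_{T\mid X}$ with output supported in $B+C$, we fix a maximizing channel (the feasible set is compact and the objective continuous) and write $(X,T)\sim p_X W_{T\mid X}$ with $\supp p_T\subseteq B+C$.

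The first step is to split $T$ into summands. Choose any map $\sigma:B+C\to B\times C$ with $\sigma(t)=(b,c)$ and $b+c=t$, and set $(Y,Z)=\sigma(T)$. This defines a joint distribution $p_{X,Y,Z}$ with $\supp p_Y\subseteq B$, $\supp p_Z\subseteq C$, and $X+Y+Z=X+T$. Applying \eqref{eq:three} to this triple gives
\[
\HH(X)+\HH(X+T)\le \max_{\Pi(p_X,p_Y)}\HH(X+Y)+\max_{\Pi(p_X,p_Z)}\HH(X+Z).
\]

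The second step bounds each coupling maximum by the corresponding $\Phi$. Every $q\in\Pi(p_X,p_Y)$ can be written as $p_X\,q_{Y\mid X}$, and $q_{Y\mid X}$ is a channel whose output is supported in $B$. Hence
\[
\max_{\Pi(p_X,p_Y)}\HH(X+Y)\le \Phi_B(p_X)+\HH(X),
\]
and likewise for the $Z$ term with $C$ in place of $B$. Substituting both bounds and subtracting $2\HH(X)$ yields $\HH(X+T)-\HH(X)\le\Phi_B(p_X)+\Phi_C(p_X)$. Since $T$ was the maximizer, the left side equals $\Phi_{B+C}(p_X)$, which proves~\eqref{eq:fi-pr}.

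The only point needing care is that the definition of $\Phi_D$ allows the channel to depend on $X$ arbitrarily, with the sole constraint that its output lies in $D$. Both steps rely on exactly this. In the first, the selection $\sigma$ turns the output into a deterministic function of $T$, so $p_{X,Y,Z}$ is a genuine distribution on $A\times B\times C$, as Theorem~\ref{thm:three} requires. In the second, the coupling maxima range over channels with a fixed output marginal, which is a subset of the channels allowed in $\Phi_B$ and $\Phi_C$. I do not expect a substantive obstacle beyond checking these two containments.
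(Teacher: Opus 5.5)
Your proposal is correct and follows essentially the same route as the paper: you split the $(B+C)$-valued output as $Y+Z$ via a fixed selection map, apply \eqref{eq:three}, and bound each pairwise coupling maximum by $\HH(X)+\Phi_B(p_X)$ and $\HH(X)+\Phi_C(p_X)$ respectively. The only difference is cosmetic: you fix a maximizing channel up front, whereas the paper takes an arbitrary admissible coupling and maximizes over it at the end.
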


\begin{proof}
Let $(X,W)\sim p_{X,W}$ be any coupling with $X\sim p_X$ and $\supp p_{W}\subseteq B+C$. Choose a representation $W=Y+Z$, with $\supp Y\subseteq B$ and $\supp Z\subseteq C$, as a function of $W$, and denote the induced marginals by $p_Y,p_Z$. By \eqref{eq:three},
\[
\begin{aligned}
\HH(X)+\HH_p(X+W)
&\le
\max_{\Pi(p_X,p_Y)}\HH_q(X+Y)
+\max_{\Pi(p_X,p_Z)}\HH_q(X+Z)\\
&\le 2\HH(X)+\Phi_B(p_X)+\Phi_C(p_X),
\end{aligned}
\]
since $\supp p_Y\subseteq B$ and $\supp p_Z\subseteq C$. Subtracting $2\HH(X)$ and
maximizing over all such couplings of $(X,W)$, with a fixed $p_X$, gives \eqref{eq:fi-pr}.
\end{proof}

\begin{corollary}[entropic Pl\"unnecke--Ruzsa inequality for $k$ summands]\label{cor:k-fi-pr}
For every $k \ge 1$, finite nonempty $B_1,\dots,B_k\subseteq G$ and every $\supp p_X \subseteq A$,
\[
\Phi_{B_1+\cdots+B_k}(p_X)
\le \sum_{i=1}^k\Phi_{B_i}(p_X).
\]
Consequently,
\[
\min_{\substack{p_X\\\supp p_X\subseteq A}}\Phi_{B_1+\cdots+B_k}(p_X)
\le
\min_{\substack{p_X\\\supp p_X\subseteq A}}\sum_{i=1}^k\Phi_{B_i}(p_X).
\]
\end{corollary}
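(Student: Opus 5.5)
The plan is to induct on $k$, using Theorem~\ref{thm:fi-pr} as the inductive step. For $k=1$ the claim is trivially an equality. For $k\ge2$, set $B:=B_1$ and $C:=B_2+\cdots+B_k$. Then $C$ is a finite nonempty subset of $G$ and $B_1+\cdots+B_k=B+C$, so Theorem~\ref{thm:fi-pr} applied to the same $p_X$ gives
\[
\Phi_{B_1+\cdots+B_k}(p_X)\le \Phi_{B_1}(p_X)+\Phi_{B_2+\cdots+B_k}(p_X).
\]
The induction hypothesis, applied to the $k-1$ sets $B_2,\dots,B_k$ with the same $p_X$, bounds the last term by $\sum_{i=2}^k\Phi_{B_i}(p_X)$. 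This yields the first inequality.

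The only points to check are that $\Phi_D(p_X)$ is well-defined and finite for each finite $D$, and that Theorem~\ref{thm:fi-pr} is stated for arbitrary finite nonempty $B,C$, not only for the original sets.
\begin{itemize}
\item $\Phi_D(p_X)$ is a maximum of a continuous function over the compact set of channels $W_{Y\mid X}$ into $D$, so it is attained and finite.
\item Theorem~\ref{thm:fi-pr} indeed holds for arbitrary finite nonempty $B,C$, so it applies with $C$ itself a sumset.
\end{itemize}
No genuine obstacle arises here, since all the work was done in Theorem~\ref{thm:three}. The main thing to be careful about is that the induction is carried out for a fixed $p_X$, so the bounds combine pointwise in $p_X$.

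For the consequence, the pointwise inequality $\Phi_{B_1+\cdots+B_k}(p_X)\le\sum_i\Phi_{B_i}(p_X)$ holds for every $p_X$ with $\supp p_X\subseteq A$. Choose $p_X^\ast$ minimizing the right-hand side over this set. A minimizer exists because each $\Phi_{B_i}$ is continuous in $p_X$ on the compact simplex over $A$. Continuity holds because the objective $\HH(X+Y)-\HH(X)$ is jointly continuous in $(p_X,W_{Y\mid X})$ and the channel set is compact; by Berge's maximum theorem, $\Phi_{B_i}$ is then continuous. If one prefers to avoid this, the minimum can be read as an infimum throughout.

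Then
\[
\min_{p_X}\Phi_{B_1+\cdots+B_k}(p_X)\le \Phi_{B_1+\cdots+B_k}(p_X^\ast)\le\sum_i\Phi_{B_i}(p_X^\ast)=\min_{p_X}\sum_i\Phi_{B_i}(p_X),
\]
which is the second claim.
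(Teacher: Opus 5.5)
Your proof is correct and follows the paper's argument: induction on $k$ via Theorem~\ref{thm:fi-pr} for a fixed $p_X$, then minimizing over $p_X$. The paper splits off $B_k$ where you split off $B_1$, which makes no difference, and your continuity argument for why the minima are attained fills in a point the paper leaves implicit.
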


\begin{proof}
The first claim follows by induction from Theorem~\ref{thm:fi-pr},
applied to $(B_1+\cdots+B_{k-1},B_k)$.
The second follows by minimizing over $\supp p_X\subseteq A$.
\end{proof}

These inequalities yield the combinatorial results in
Section~\ref{sec:ac} through the following entropic characterization of the magnification ratio, due to Lau and Nair.

\begin{theorem}[entropic characterization of magnification ratio \cite{ln25}]\label{thm:magnif}
Let $A,D\subseteq G$ be finite nonempty sets. Then
\begin{equation}\label{eq:magnif}
\min_{\substack{p_X\\\supp p_X\subseteq A}}\Phi_D(p_X)
=\log\mu(A,D)
=\log\min_{\emptyset\ne S\subseteq A}\frac{|S+D|}{|S|}.
\end{equation}
\end{theorem}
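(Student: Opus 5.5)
We prove the two inequalities in \eqref{eq:magnif} separately. The second equality is simply the definition of $\mu(A,D)$, so the work is to show $\min_{p_X}\Phi_D(p_X)=\log\mu(A,D)$.

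\emph{Upper bound.} Fix a minimizing $S\subseteq A$ and let $p_X$ be uniform on $S$. For any channel $W_{Y\mid X}$ with $\supp p_Y\subseteq D$, the sum $X+Y$ is supported in $S+D$. Hence $\HH(X+Y)\le\log|S+D|$, while $\HH(X)=\log|S|$, so $\Phi_D(p_X)\le\log(|S+D|/|S|)=\log\mu(A,D)$. This direction is routine.

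\emph{Lower bound.} Given an arbitrary $p_X$ on $A$, we must build a channel $W_{Y\mid X}$ into $D$ with $\HH(X+Y)\ge \HH(X)+\log\mu$. The plan is to use the bipartite graph between $A$ and $A+D$, with edges $x\sim x+d$. By the definition of $\mu$, every $S\subseteq A$ has $|N(S)|\ge \mu|S|$. This is exactly a Hall-type condition for a fractional ``$\mu$-fold matching''. By max-flow/min-cut, or the supply--demand theorem, it yields a nonnegative weight $w(x,s)$ on edges with $\sum_s w(x,s)=1$ for each $x$ and $\sum_x w(x,s)\le 1/\mu$ for each $s$. Feasibility follows from Hall's condition: put capacity $1$ from the source to each $x$ and capacity $1/\mu$ from each $s$ to the sink. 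A cut argument then shows the max flow equals $|A|$ precisely because $|N(S)|/\mu\ge|S|$.

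Now define the channel by $P(X+Y=s\mid X=x)=w(x,s)$. This gives $p_{X+Y}(s)=\sum_x p_X(x)w(x,s)$, a doubly substochastic-like averaging of $p_X$ spread over at least $\mu$ times as many points. We then need $\HH(X+Y)\ge\HH(X)+\log\mu$. Write $\HH(X+Y)=\HH(X)+\HH(X+Y\mid X)-\HH(X\mid X+Y)$. It suffices to compare $p_{X+Y}$ with the distribution obtained by splitting each atom $p_X(x)$ into mass spread with density at most $p_X(x)/\mu$. Equivalently, we use majorization: the matrix $\mu w$ has row sums $\mu$ and column sums $\le1$, so $p_{X+Y}$ is the image of $p_X$ under the map sending mass $p_X(x)$ to pieces that, after rescaling, form a doubly substochastic transformation of the vector $(p_X(x)/\mu)$ repeated $\mu$ times.

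The main obstacle is this last step: $\mu$ need not be an integer, so the ``repeat $\mu$ times'' picture is only heuristic. The first approach to try is a direct Schur-concavity/convexity argument. Write
\[
\HH(X+Y)=\sum_s\phi\Bigl(\sum_x p_X(x)w(x,s)\Bigr),\qquad \phi(t)=-t\log t .
\]
Then apply Jensen column-wise with weights $\mu w(x,s)$, padded with a dummy zero-mass entry to make the column sums equal to $1$, using $\phi(0)=0$ and concavity. This should give
\[
\phi\Bigl(\sum_x p_X(x)w(x,s)\Bigr)\ge\sum_x \mu w(x,s)\,\phi\bigl(p_X(x)/\mu\bigr).
\]
Summing over $s$ and using row sums $\sum_s w(x,s)=1$ yields $\sum_x\mu\,\phi(p_X(x)/\mu)=\HH(X)+\log\mu$. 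If this Jensen step checks out, the lower bound, and hence the theorem, follows. Otherwise, the fallback is to cite \cite{ln25}.
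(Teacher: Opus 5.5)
Your proposal is correct and essentially matches the paper's proof in its appendix: the same upper bound via the uniform distribution on a minimizing $S$, and the same flow network $s\to A\to A+D\to t$ with capacities $1$ and $1/\mu$, giving a single channel with row sums $1$ and column sums at most $1/\mu$. The only difference is the final entropy step, where the paper uses data processing for relative entropy against $W\circ u_A$ and $u_B$; your column-wise Jensen with weights $\mu w(x,s)$ plus a dummy zero atom is valid as written (since $\sum_x\mu w(x,s)\le 1$, $\phi(0)=0$, and $\mu\,\phi(p/\mu)=\phi(p)+p\log\mu$), so you need no fallback and can drop the majorization heuristic.
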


The original proof of the lower bound in \cite{ln25} (Theorem 7) analyzes an optimal channel $W^\star_{Y|X}$ for each input distribution
$p_X$ using the KKT conditions, then reweights the input classes of
an optimal $p_X^\star$ while preserving the optimality of $W^\star_{Y|X}$.
For completeness, we provide an alternative proof using the
max-flow min-cut theorem in Appendix~\ref{apdx:magnif}.

\section{Implications for additive combinatorics}\label{sec:ac}
We now use the entropic inequalities from the previous section to obtain
standalone proofs of well-known inequalities in additive combinatorics,
as well as inequalities that are new to us. The aim is to enrich the
connection between inequalities in the two fields and contribute to
the rich programme pioneered by Ruzsa \cite{ruzsa09} and Tao \cite{tao10}. 

\subsection{Ruzsa's triangle inequality for sumsets}\label{sec:ruzsa}

\begin{theorem}[Ruzsa's sumset triangle inequality]\label{thm:ruzsa}
For all finite nonempty $A,B,C\subseteq G$,
\begin{equation}\label{eq:ruzsa}
|A||B+C|\ \le\ |A+B||A+C|.
\end{equation}
\end{theorem}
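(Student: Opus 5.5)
The plan is to derive \eqref{eq:ruzsa} from the $k=2$ case of Corollary~\ref{cor:k-fi-pr}, using Theorem~\ref{thm:magnif} to turn both sides into magnification ratios. The resulting combinatorial statement is a Petridis-type inequality with different summands, $\mu(A,B+C)\le \mu(A,B)\,\mu(A,C)$. We then specialize the set $S$ so that the factor $|A|$ appears.

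\textbf{Step 1 (entropic to magnification ratios).} Corollary~\ref{cor:k-fi-pr} gives
\[
\min_{\supp p_X\subseteq A}\Phi_{B+C}(p_X)\le \min_{\supp p_X\subseteq A}\bigl[\Phi_B(p_X)+\Phi_C(p_X)\bigr].
\]
Theorem~\ref{thm:magnif} identifies the left side with $\log\mu(A,B+C)$. We evaluate the right side at the uniform distribution on $A$ and bound each term by the trivial estimate $\Phi_D(p_X)\le \log|A+D|-\HH(X)$. This holds because $X+Y$ lies in $A+D$. With $\HH(X)=\log|A|$, we get
\[
\log\mu(A,B+C)\le \log\frac{|A+B|}{|A|}+\log\frac{|A+C|}{|A|}.
\]
Hence $\mu(A,B+C)\,|A|^2\le |A+B||A+C|$.

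\textbf{Step 2 (extracting $|B+C|$).} From the definition of $\mu$, there is a nonempty $S\subseteq A$ with $|S+B+C|=\mu(A,B+C)|S|$. Since $|S+B+C|\ge|B+C|$ for any nonempty $S$, we obtain $|B+C|\le \mu(A,B+C)|S|\le\mu(A,B+C)|A|$. Combining this with Step 1 gives $|A||B+C|\le |A|^2\mu(A,B+C)\le|A+B||A+C|$, which is \eqref{eq:ruzsa}.

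\textbf{Main obstacle.} Step 1 is where care is needed. The minimum on the right of Corollary~\ref{cor:k-fi-pr} need not be attained at the uniform distribution, but any admissible choice yields a valid upper bound, and the uniform choice is the one that produces the factor $|A|$. We must also check that the left-hand minimum is exactly $\log\mu(A,B+C)$, with no loss, as Theorem~\ref{thm:magnif} asserts. Step 2 is elementary translation-invariance: $S+B+C$ contains a translate $s+B+C$.
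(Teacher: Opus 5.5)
Your proof is correct, but it takes a different route from the paper. Step~1 is exactly the $k=2$ case of the paper's proof of Theorem~\ref{thm:ruzsa-lemma}. Step~2 is Ruzsa's classical reduction from the Pl\"unnecke--Ruzsa inequality to the triangle inequality, via $|B+C|\le|S+B+C|$ and $|S|\le|A|$. This route depends on the nontrivial direction $\log\mu(A,B+C)\le\min_{p_X}\Phi_{B+C}(p_X)$ of Theorem~\ref{thm:magnif}. That direction needs either the KKT analysis of \cite{ln25} or the max-flow min-cut argument of Appendix~\ref{apdx:magnif}.

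The paper never leaves the entropic level. It applies \eqref{eq:two-max} with $p_X=\unif_A$, taking $p_Y,p_Z$ to be the marginals of a coupling that makes $Y+Z$ uniform on $B+C$. Inequality \eqref{eq:two-max} comes from \eqref{eq:three} by adjoining an independent $X$. The bound $\HH(X+W)\ge\HH(W)$ in \eqref{eq:conv} is exactly the entropic counterpart of your translation step $|S+B+C|\ge|B+C|$.

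The trade-off is as follows. The paper's proof needs only Theorem~\ref{thm:three} and trivial support bounds. Yours passes through the stronger intermediate statement $\mu(A,B+C)\le|A+B||A+C|/|A|^2$; in fact it gives $|A|^2|B+C|\le|S^\ast||A+B||A+C|$ for a minimizer $S^\ast$. The price is importing Theorem~\ref{thm:magnif}.

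That import is avoidable even within your own framework. Instead of $\log\mu(A,B+C)$ you only need $\Phi_{B+C}(\unif_A)\ge\log|B+C|-\log|A|$. This follows by taking $W$ uniform on $B+C$ and independent of $X$. Combining it with Theorem~\ref{thm:fi-pr} at $\unif_A$ gives the result directly, and this is essentially the paper's argument.
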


\begin{proof}
Choose a joint distribution $p_{Y,Z}$ on $B\times C$ such that
$Y+Z$ is uniform on $B+C$, and let $p_Y,p_Z$ be its marginals.
Put $p_X=\unif_A$. By \eqref{eq:two-max},
\[
\HH(X)+\max_{\Pi(p_Y,p_Z)}\HH(Y+Z)
\le
\max_{\Pi(p_X,p_Y)}\HH(X+Y)
+\max_{\Pi(p_X,p_Z)}\HH(X+Z).
\]
Here $\HH(X)=\log|A|$ and
$\max_{\Pi(p_Y,p_Z)}\HH(Y+Z)=\log|B+C|$,
while the right-hand side is at most
$\log|A+B|+\log|A+C|$.
Exponentiating gives \eqref{eq:ruzsa}.
\end{proof}

\subsection{The Pl\"unnecke--Ruzsa inequality and its strengthening}
\label{sec:ruzsa-lemma}

We first recall Pl\"unnecke's inequality to explain the historical connections between the results below. Pl\"unnecke's inequality is the special case of Ruzsa' lemma below (Theorem~\ref{thm:ruzsa-lemma}).
\begin{theorem}[Pl\"unnecke's inequality \cite{plunnecke70}]
\label{thm:plunnecke}
For all finite nonempty $A,B\subseteq G$ and every $k \ge 1$,
\[
\mu(A,kB)
=\min_{\emptyset\ne S\subseteq A}\frac{|S+kB|}{|S|}
\le \Bigl(\frac{|A+B|}{|A|}\Bigr)^k,
\]
where $kB:=B+\cdots+B$ ($k$ times).
\end{theorem}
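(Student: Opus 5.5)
We will obtain Plünnecke's inequality from Corollary~\ref{cor:k-fi-pr} with all summands equal, $B_1=\cdots=B_k=B$, and then pass between entropy and cardinality using Theorem~\ref{thm:magnif} (with $D=kB$) on one side and an evaluation at the uniform distribution on the other.

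\textbf{Step 1 (left-hand side).} By Theorem~\ref{thm:magnif} applied to $(A,kB)$,
\[
\log\mu(A,kB)=\min_{\supp p_X\subseteq A}\Phi_{kB}(p_X).
\]

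\textbf{Step 2 (subadditivity).} The first claim of Corollary~\ref{cor:k-fi-pr} holds pointwise in $p_X$. Taking $B_i=B$ for all $i$ gives, for every $p_X$ supported in $A$,
\[
\Phi_{kB}(p_X)\le k\,\Phi_B(p_X).
\]

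\textbf{Step 3 (evaluate at a convenient $p_X$).} We do not minimise the right-hand side; we simply choose $p_X=\unif_A$. Then
\[
\Phi_B(\unif_A)=\max_{W_{Y\mid X}}\HH(X+Y)-\log|A|\le \log|A+B|-\log|A|,
\]
because $X+Y$ is supported in $A+B$ and so $\HH(X+Y)\le\log|A+B|$.

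\textbf{Step 4 (combine).} Putting the three steps together,
\[
\log\mu(A,kB)\le \Phi_{kB}(\unif_A)\le k\,\Phi_B(\unif_A)\le k\log\frac{|A+B|}{|A|}.
\]
Exponentiating gives the stated inequality.

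The only substantive input is Theorem~\ref{thm:magnif}, specifically its lower bound $\log\mu(A,D)\le\min_{p_X}\Phi_D(p_X)$, which the paper takes from \cite{ln25} (with an alternative proof in Appendix~\ref{apdx:magnif}). The remaining steps are routine. One point needs care: we use the pointwise inequality of Corollary~\ref{cor:k-fi-pr} at a single fixed $p_X$, not its minimised form, since the minimised form alone would not let us insert the uniform distribution on the right. The same argument with distinct $B_i$ should give the different-summand version, Theorem~\ref{thm:ruzsa-lemma}, with right-hand side $\prod_i |A+B_i|/|A|$.
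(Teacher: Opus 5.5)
Your proposal is correct and is essentially the paper's argument. The paper proves Theorem~\ref{thm:ruzsa-lemma} by exactly this chain (Theorem~\ref{thm:magnif}, then Corollary~\ref{cor:k-fi-pr}, then evaluation at $\unif_A$ with the support bound $\HH(X+Y)\le\log|A+B_i|$), and it obtains Pl\"unnecke's inequality as the special case $B_1=\cdots=B_k=B$. Your caveat about the minimised form is unnecessary: the paper uses it directly via $\min_{p_X}\sum_i\Phi_{B_i}(p_X)\le\sum_i\Phi_{B_i}(\unif_A)$, since a minimum is at most its value at any particular $p_X$.
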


Ruzsa's sumset triangle inequality (Theorem~\ref{thm:ruzsa}) was
originally proved using different-summand Pl\"unnecke--Ruzsa inequality (Theorem~\ref{thm:ruzsa-lemma}),
which follows from Pl\"unnecke's inequality
(Theorem~\ref{thm:plunnecke}) and the tensor power trick. Later, Petridis \cite{petridis12} developed a lemma that directly implies both
the sumset triangle inequality and Pl\"unnecke's inequality. However, to our knowledge, no direct proof of Pl\"unnecke--Ruzsa inequality using
Petridis's machinery was previously known. This motivated us to seek an entropic counterpart of Pl\"unnecke--Ruzsa inequality, leading to Corollary~\ref{cor:k-fi-pr} and the strengthening in
Theorem~\ref{thm:ruzsa-lemma-strong}. Below, we give entropic proofs of both statements.

\begin{theorem}[Pl\"unnecke--Ruzsa inequality \cite{ruzsa09}]\label{thm:ruzsa-lemma}
For every $k \ge 1$ and all finite nonempty $A,B_1,\dots,B_k\subseteq G$,
\begin{equation}\label{eq:ruzsa-lemma}
\mu(A,B_1+\cdots+B_k) = \min_{\emptyset\ne S\subseteq A}
\frac{|S+B_1+\cdots+B_k|}{|S|}
\ \le\ \prod_{i=1}^k\frac{|A+B_i|}{|A|}.
\end{equation}
\end{theorem}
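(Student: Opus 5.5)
The plan is to combine Corollary~\ref{cor:k-fi-pr} with the entropic characterization of the magnification ratio (Theorem~\ref{thm:magnif}). Taking logarithms, \eqref{eq:ruzsa-lemma} is equivalent to
\[
\log\mu(A,B_1+\cdots+B_k)\ \le\ \sum_{i=1}^k\bigl(\log|A+B_i|-\log|A|\bigr).
\]
Applying Theorem~\ref{thm:magnif} with $D=B_1+\cdots+B_k$, the left-hand side equals $\min_{\supp p_X\subseteq A}\Phi_{B_1+\cdots+B_k}(p_X)$. This is the only place where Theorem~\ref{thm:magnif} is needed, and only its lower-bound direction: $\log\mu$ is at most the minimum of $\Phi$.

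Rather than minimizing the right-hand side of Corollary~\ref{cor:k-fi-pr}, I will evaluate it at one particular input, the uniform distribution $p_X=\unif_A$. The first claim of Corollary~\ref{cor:k-fi-pr} then gives
\[
\min_{\supp p_X\subseteq A}\Phi_{B_1+\cdots+B_k}(p_X)\ \le\ \Phi_{B_1+\cdots+B_k}(\unif_A)\ \le\ \sum_{i=1}^k\Phi_{B_i}(\unif_A).
\]

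It remains to bound each term $\Phi_{B_i}(\unif_A)$. For any channel $W_{Y\mid X}$ with $\supp p_Y\subseteq B_i$, the sum $X+Y$ takes values in $A+B_i$. Hence $\HH(X+Y)\le\log|A+B_i|$, while $\HH(X)=\log|A|$. It follows that $\Phi_{B_i}(\unif_A)\le\log\bigl(|A+B_i|/|A|\bigr)$. Summing these bounds over $i$ and exponentiating yields \eqref{eq:ruzsa-lemma}.

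I expect no real obstacle at this level, since the substantive work is already contained in Theorem~\ref{thm:three} and Theorem~\ref{thm:magnif}. The one point to check is that the \emph{same} $p_X$ must be used in every summand; this is why I evaluate at $\unif_A$ rather than minimizing the terms separately. If any subtlety arose in applying Theorem~\ref{thm:magnif} to a sumset $D$, I would note that the theorem holds for arbitrary finite nonempty $D\subseteq G$, so no special structure of $B_1+\cdots+B_k$ is required.
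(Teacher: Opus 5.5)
Your proposal is correct and matches the paper's proof: it uses the direction $\log\mu(A,D)\le\min_{p_X}\Phi_D(p_X)$ of Theorem~\ref{thm:magnif} with $D=B_1+\cdots+B_k$, then the pointwise bound of Corollary~\ref{cor:k-fi-pr} at the common input $\unif_A$, then the support bound $\Phi_{B_i}(\unif_A)\le\log(|A+B_i|/|A|)$. The only difference is cosmetic: the paper first passes through $\min_{p_X}\sum_i\Phi_{B_i}(p_X)$ and then specializes to $\unif_A$, whereas you specialize immediately, which is equivalent.
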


\begin{proof}
Write $D:=B_1+\cdots+B_k$, so the left-hand side is $\mu(A,D)$.
By Theorem~\ref{thm:magnif} and Corollary~\ref{cor:k-fi-pr},
\[
\begin{aligned}
\log\mu(A,D)
&=\min_{\substack{p_X\\ \supp p_X\subseteq A}}\Phi_D(p_X)
\le \min_{\substack{p_X\\ \supp p_X\subseteq A}}\sum_{i=1}^k\Phi_{B_i}(p_X)\\
&\le \sum_{i=1}^k\Phi_{B_i}(\unif_A)
\le \sum_{i=1}^k\log\frac{|A+B_i|}{|A|}.
\end{aligned}
\]
Exponentiating gives \eqref{eq:ruzsa-lemma}.
\end{proof}

\begin{theorem}[strengthening of Pl\"unnecke--Ruzsa inequality]\label{thm:ruzsa-lemma-strong}
For every $k \ge 1$ and all finite nonempty $A,B_1,\dots,B_k\subseteq G$,
\begin{equation}\label{eq:ruzsa-strong}
\mu(A,B_1+\cdots+B_k) = \min_{\emptyset\ne S\subseteq A}\frac{|S+B_1+\cdots+B_k|}{|S|}
\le
\min_{\emptyset\ne S\subseteq A}
\prod_{i=1}^k\frac{|S+B_i|}{|S|}.
\end{equation}
\end{theorem}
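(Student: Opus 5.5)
Fix a nonempty $S\subseteq A$ attaining (or any candidate for) the right-hand minimum, and compare $\mu(A,D)$ with $\mu(S,D)$, where $D:=B_1+\cdots+B_k$. The plan is to rerun the proof of Theorem~\ref{thm:ruzsa-lemma} with $S$ in place of $A$, and then pass back to $A$ by monotonicity of the magnification ratio under shrinking the ground set.

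First, since every nonempty $S'\subseteq S$ is also a nonempty subset of $A$, the definition gives
\[
\mu(A,D)=\min_{\emptyset\ne S'\subseteq A}\frac{|S'+D|}{|S'|}\le \min_{\emptyset\ne S'\subseteq S}\frac{|S'+D|}{|S'|}=\mu(S,D).
\]
Next, apply Theorem~\ref{thm:magnif} to the pair $(S,D)$, then Corollary~\ref{cor:k-fi-pr} with $S$ playing the role of $A$:
\[
\log\mu(S,D)=\min_{\supp p_X\subseteq S}\Phi_D(p_X)\le \min_{\supp p_X\subseteq S}\sum_{i=1}^k\Phi_{B_i}(p_X)\le\sum_{i=1}^k\Phi_{B_i}(\unif_S).
\]
Finally, bound each term directly. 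For $X\sim\unif_S$ and any channel with $Y$ valued in $B_i$, the sum $X+Y$ is supported on $S+B_i$, so
\[
\Phi_{B_i}(\unif_S)\le\log|S+B_i|-\log|S|.
\]
Exponentiating yields $\mu(A,D)\le\prod_i |S+B_i|/|S|$. Since $S$ was arbitrary, taking the minimum over $S$ gives \eqref{eq:ruzsa-strong}.

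There is no serious obstacle. The one point to check is that Theorem~\ref{thm:magnif} and Corollary~\ref{cor:k-fi-pr} hold for an arbitrary finite nonempty ground set, so they apply to $S$ as well as to $A$. They do, since both are stated for general finite nonempty $A$. Alternatively, one can skip the monotonicity step: restrict the minimum in Theorem~\ref{thm:magnif} for $(A,D)$ to distributions supported in $S\subseteq A$, which can only increase it, and evaluate at $\unif_S$.
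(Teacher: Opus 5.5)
Your proof is correct and is essentially the paper's argument: the entropic characterization of $\mu$, the subadditivity of $\Phi$ from Corollary~\ref{cor:k-fi-pr}, and the bound $\Phi_{B_i}(\unif_S)\le\log(|S+B_i|/|S|)$. The paper evaluates the minimum over $\supp p_X\subseteq A$ directly at $\unif_S$, which is exactly your ``alternative'' route, so your monotonicity step $\mu(A,D)\le\mu(S,D)$ is a harmless detour.
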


\begin{proof}
Write $D:=B_1+\cdots+B_k$. For every nonempty $S\subseteq A$,
evaluate the minimum in the proof of Theorem~\ref{thm:ruzsa-lemma}
at $\unif_S$ instead of $\unif_A$. Then
\[
\log\mu(A,D)
\le \sum_{i=1}^k\Phi_{B_i}(\unif_S)
\le \sum_{i=1}^k\log\frac{|S+B_i|}{|S|}.
\]
Exponentiating and minimizing over $S$ gives \eqref{eq:ruzsa-strong}.
\end{proof}



\subsection{A three-set sumset inequality}

\begin{proposition}\label{prop:abc}
Let $A,B,C\subseteq G$ be finite and nonempty. Define
\[
\cG(A,B+C)
:=
\left\{
p_{X,W}:
p_X=\unif_A,\ 
\supp p_W\subseteq B+C,\ 
p_{X+W}=\unif_{A+B+C}
\right\}.
\]
If $\cG(A,B+C)$ is nonempty, then
\begin{equation}\label{eq:abc}
|A||A+B+C|\ \le\ |A+B||A+C|.
\end{equation}
\end{proposition}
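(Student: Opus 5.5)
Take $p_{X,W}\in\cG(A,B+C)$ and apply the max-coupling inequality \eqref{eq:three} in the same way as in the proof of Theorem~\ref{thm:fi-pr}. Since $\supp p_W\subseteq B+C$, we fix for each $w\in B+C$ a representation $w=b(w)+c(w)$ with $b(w)\in B$ and $c(w)\in C$. We then set $Y=b(W)$ and $Z=c(W)$. This gives a joint distribution $p_{X,Y,Z}$ on $A\times B\times C$ with $Y+Z=W$ almost surely, so $X+Y+Z=X+W$.

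We now evaluate both sides of \eqref{eq:three} for this $p_{X,Y,Z}$. By the definition of $\cG(A,B+C)$, $X\sim\unif_A$, so $\HH(X)=\log|A|$. Also $X+Y+Z=X+W\sim\unif_{A+B+C}$, so $\HH_p(X+Y+Z)=\log|A+B+C|$. The left-hand side of \eqref{eq:three} is therefore exactly $\log|A|+\log|A+B+C|$.

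For the right-hand side, any coupling in $\Pi(p_X,p_Y)$ has $X+Y$ supported in $A+B$, so $\max_{\Pi(p_X,p_Y)}\HH(X+Y)\le\log|A+B|$. In the same way, $\max_{\Pi(p_X,p_Z)}\HH(X+Z)\le\log|A+C|$.

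Combining the two sides gives $\log|A|+\log|A+B+C|\le\log|A+B|+\log|A+C|$, and exponentiating yields \eqref{eq:abc}.

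There is essentially no obstacle here. The hypothesis that $\cG(A,B+C)$ is nonempty is exactly what makes the left-hand side of \eqref{eq:three} equal the combinatorial quantities. The only point to check is that the measurable selection $w\mapsto(b(w),c(w))$ keeps $X+Y+Z=X+W$ pointwise. It does, because the selection is a deterministic function of $W$ satisfying $b(W)+c(W)=W$, and any fixed choice of representation works. The marginals $p_Y$ and $p_Z$ produced this way are arbitrary, but this does not matter because the right-hand bounds depend only on the supports.
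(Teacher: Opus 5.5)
Your proposal is correct and follows the paper's proof essentially step for step. You take $p_{X,W}\in\cG(A,B+C)$, split $W=Y+Z$ by a fixed representation map, and apply \eqref{eq:three}. The two uniformity conditions make the left side equal to $\log|A|+\log|A+B+C|$, and the support bounds give $\log|A+B|+\log|A+C|$ on the right (the ``measurable selection'' check is trivial here since all sets are finite).
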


\begin{proof}
Choose $p_{X,W}\in\cG(A,B+C)$ and let $(X,W)\sim p_{X,W}$.
Choose a representation $W=Y+Z$, with $Y\in B$ and $Z\in C$,
as a function of $W$, and denote the induced marginals by $p_Y,p_Z$.
Then $X+Y+Z=X+W$ is uniform on $A+B+C$. By \eqref{eq:three},
\[
\begin{aligned}
\log|A|+\log|A+B+C|
&\le
\max_{\Pi(p_X,p_Y)}\HH(X+Y)
+\max_{\Pi(p_X,p_Z)}\HH(X+Z)\\
&\le \log|A+B|+\log|A+C|.
\end{aligned}
\]
The last inequality follows because the sums are supported on $A+B$ and $A+C$, respectively, under every corresponding coupling.
Exponentiating gives \eqref{eq:abc}.
\end{proof}


\begin{remark}[the hypothesis cannot be dropped]\label{rem:abc-false}
Inequality \eqref{eq:abc} is false in general. In $G=\ZZ_{20}$ take
\[
A=\{0,2,6,7,10,14,18\},\qquad B=\{11,15\},\qquad C=\{0,8\},
\]
so that $|A|=7$, $|A+B|=|A+C|=9$ and $|A+B+C|=13$. Then
$|A||A+B+C|=91>81=|A+B||A+C|$. Consistently, $\cG(A,B+C)$ is empty.
\end{remark}

\section{A non-abelian extension}\label{sec:nonab}

The direct analogue of Theorem~\ref{thm:three} is false in general
non-abelian groups. The inequality
\[
\HH(X)+\HH(X\circ Y\circ Z)\ \le\
\max_{\Pi(p_X,p_Y)}\HH(X\circ Y)+\max_{\Pi(p_X,p_Z)}\HH(X\circ Z)
\]
does not hold, and neither does $|A||B\circ C|\le|A\circ B||A\circ C|$. The
obstruction is the recombination step: preserving the distributions of
$(X,X\circ Z)$ and $(X\circ Z,X\circ Y\circ Z)$ need not preserve the distribution
of $Y$. With $U=Y\circ X\circ Z$, $R=Y\circ X$, and $S=X\circ Z$, however, both recombinations
preserve the prescribed marginals, and the argument survives.

\begin{theorem}[non-abelian max-coupling entropy inequality]
\label{thm:nonabelian-coupling}
Let $(G,\circ)$ be a group, let $A,B,C\subseteq G$ be finite and
nonempty, and let $(X,Y,Z)\sim p_{X,Y,Z}$ be any coupling of marginals
$p_X,p_Y,p_Z$ on $A,B,C$, respectively.
Then
\begin{equation} \label{eq:nonabelian-mc}
    \HH(X)+\HH_p(Y\circ X\circ Z)\ \le\
\max_{\Pi(p_X,p_Y)}\HH(Y\circ X)+\max_{\Pi(p_X,p_Z)}\HH(X\circ Z).
\end{equation}
\end{theorem}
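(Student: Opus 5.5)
We follow the proof of Theorem~\ref{thm:three} verbatim, with the ordering $U=Y\circ X\circ Z$, $R=Y\circ X$, $S=X\circ Z$. Fix the coupling $p_{X,Y,Z}$ and let $p_U$ be the induced law of $U$. Let $\mathcal M$ be the set of couplings $q_{X,Y,Z}$ with $q_X=p_X$, $q_Y=p_Y$, $q_Z=p_Z$ and $q_U=p_U$. This set is nonempty, convex and compact, so the strictly concave $\HH_q(X,Y,Z)$ has a unique maximizer $q^\star\in\mathcal M$. It then suffices to show
\[
\HH_{q^\star}(R)+\HH_{q^\star}(S)\ \ge\ \HH_{q^\star}(X)+\HH_{q^\star}(U).
\]
Once this holds, $q^\star_X=p_X$, $q^\star_U=p_U$, $q^\star_{X,Y}\in\Pi(p_X,p_Y)$ and $q^\star_{X,Z}\in\Pi(p_X,p_Z)$ give \eqref{eq:nonabelian-mc} directly.

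\textbf{Bijections.} We first check that, in any group, $(X,Y,Z)$ is in bijection with each of $(X,R,U)$, $(X,S,U)$, $(R,X,S)$ and $(R,U,S)$:
\begin{itemize}
\item from $(X,R,U)$: $Y=R\circ X^{-1}$ and $Z=X^{-1}\circ R^{-1}\circ U$;
\item from $(X,S,U)$: $Z=X^{-1}\circ S$ and $Y=U\circ S^{-1}$;
\item from $(R,X,S)$: $Y$ and $Z$ come from $R$ and $S$ directly;
\item from $(R,U,S)$: $X=R^{-1}\circ U\circ S^{-1}\circ\ldots$ need not make sense, so we go via $Y\circ X=R$ and $X\circ Z=S$, $U=R\circ Z$. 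This gives $Z=R^{-1}\circ U$, then $X=S\circ Z^{-1}$, then $Y=R\circ X^{-1}$.
\end{itemize}
So all five tuples have equal joint entropy under any $q$. Expanding each tuple's entropy by the chain rule in two ways, exactly as in the abelian case, yields the identity
\[
\HH_q(R)+\HH_q(S)-\HH_q(X)-\HH_q(U)=\II_q(R;S\mid X)+\II_q(R;S\mid U)-\II_q(X;U\mid R)-\II_q(X;U\mid S).
\]
This identity is purely formal, so it holds here too.

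\textbf{Recombinations.} It remains to show $\II_{q^\star}(X;U\mid R)=0$ and $\II_{q^\star}(X;U\mid S)=0$. For the first, form $q^{(R)}=q^\star_R\,q^\star_{X\mid R}\,q^\star_{U\mid R}$ on $(X,R,U)$ and pull it back through the bijection. The pairs $(X,R)$ and $(R,U)$ keep their $q^\star$ laws, so the law of $Y=R\circ X^{-1}$, a function of $(X,R)$, is preserved, as are those of $X$ and $U$. The law of $Z=R^{-1}\circ U$, a function of $(R,U)$, is also preserved. The same holds symmetrically for $S$: $Z=X^{-1}\circ S$ is a function of $(X,S)$ and $Y=U\circ S^{-1}$ is a function of $(S,U)$. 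Hence both recombined laws lie in $\mathcal M$. Their joint entropy equals $\HH_{q^\star}(X,Y,Z)+\II_{q^\star}(X;U\mid R)$, respectively with $S$ in place of $R$, so maximality of $q^\star$ forces both conditional mutual informations to vanish.

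\textbf{Main obstacle.} The key step is verifying that $Z$ is determined by $(R,U)$ alone and $Y$ by $(S,U)$ alone. This is exactly where the ordering $Y\circ X\circ Z$ is needed (cf.\ Remark~\ref{rem:abelianimp}): $Z=R^{-1}\circ U$ and $Y=U\circ S^{-1}$. With the ordering $X\circ Y\circ Z$ the second of these fails.
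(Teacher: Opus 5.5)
Your proposal is correct and follows the paper's proof essentially step for step: the same ordering $U=Y\circ X\circ Z$, $R=Y\circ X$, $S=X\circ Z$, the same entropy maximizer $q^\star$ over $\mathcal M$, the same two Markov recombinations forcing $\II_{q^\star}(X;U\mid R)=\II_{q^\star}(X;U\mid S)=0$, and the same four-triple entropy identity. One harmless slip: in your first bullet, $X^{-1}\circ R^{-1}\circ U$ equals $X^{-1}\circ Z$, not $Z$; the correct inversion is $Z=R^{-1}\circ U$, which you use correctly in the recombination step, so nothing downstream is affected.
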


\begin{proof}
Set $U=Y\circ X\circ Z$, $R=Y\circ X$, and $S=X\circ Z$,
and let $p_U$ be the induced distribution of $U$. Define
\[
\mathcal{M}(p_X,p_Y,p_Z,p_U) := \{q_{X,Y,Z}:\ q_X=p_X,\ q_Y=p_Y,\ q_Z=p_Z,\ q_U=p_U\}.
\]
As in the proof of Theorem~\ref{thm:three}, let $q^\star$
be the unique maximizer of $\HH_q(X,Y,Z)$ over $\mathcal M$.
The identities
\[
Y=R\circ X^{-1},\qquad Z=R^{-1}\circ U,
\qquad
Z=X^{-1}\circ S,\qquad Y=U\circ S^{-1}
\]
show that recombination on $(X,R,U)$ and on $(X,S,U)$
preserves all prescribed marginals. The same maximality argument
therefore gives
\[
\II_{q^\star}(X;U\mid R)
=\II_{q^\star}(X;U\mid S)=0.
\]
Since $U=R\circ X^{-1}\circ S$, any three of $X,R,S,U$
determine the fourth. The corresponding triples are also in
bijection with $(X,Y,Z)$, so the entropy identity in the proof of
Theorem~\ref{thm:three} applies:
\[
\HH_{q^\star}(R)+\HH_{q^\star}(S)
-\HH_{q^\star}(X)-\HH_{q^\star}(U)
=
\II_{q^\star}(R;S\mid X)+\II_{q^\star}(R;S\mid U)
\ge0.
\]
Using the prescribed marginals and bounding the two sum entropies
by their corresponding pairwise coupling maxima gives
\eqref{eq:nonabelian-mc}.
\end{proof}

\begin{corollary}[a non-abelian sumset inequality]
\label{cor:nonabelian-combinatorial}
For all finite nonempty subsets $A,B,C$ of a group $(G,\circ)$, we have
\begin{equation} \label{eq:nonabelian-sumset1}
    \sum_{x\in A}|B\circ x\circ C|
\ \le\ |B\circ A|\,|A\circ C|.
\end{equation}

Consequently,
\begin{equation} \label{eq:nonabelian-sumset2}
|A|\min_{x\in A}|B\circ x\circ C|
\ \le\ |B\circ A|\,|A\circ C|.
\end{equation}
\end{corollary}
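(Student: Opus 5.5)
We apply Theorem~\ref{thm:nonabelian-coupling} with $p_X=\unif_A$ and a suitable coupling of $(X,Y,Z)$. The left side of \eqref{eq:nonabelian-mc} will then be at least $\log|A|+\HH(U)$. The right side is at most $\log|B\circ A|+\log|A\circ C|$, because every coupling of $p_X,p_Y$ puts $Y\circ X$ in $B\circ A$, and every coupling of $p_X,p_Z$ puts $X\circ Z$ in $A\circ C$. So it suffices to build a coupling whose $U=Y\circ X\circ Z$ has
\[
\HH(U)\ \ge\ \log\frac{1}{|A|}\sum_{x\in A}|B\circ x\circ C|.
\]

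\textbf{Construction.} Take $X$ uniform on $A$. Given $X=x$, let $U$ be uniform on $B\circ x\circ C$. For each $u\in B\circ x\circ C$, fix a representation $u=b\circ x\circ c$ with $b\in B$, $c\in C$, and set $Y=b$, $Z=c$ as a deterministic function of $(x,u)$. The marginals $p_Y,p_Z$ are whatever this induces; they are supported on $B$ and $C$, which is all the right-side bound needs.

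\textbf{Entropy of $U$ for the first inequality.} One has $\HH(U)\ge\HH(U\mid X)=\frac1{|A|}\sum_x\log|B\circ x\circ C|$. This is the average of the logarithms, not the logarithm of the average, so by Jensen it is too weak in general. This is the main obstacle.

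To overcome it, weight $X$ instead of taking it uniform. Set $p_X(x)\propto|B\circ x\circ C|$ and let $U\mid X=x$ be uniform on $B\circ x\circ C$. Then the pair $(X,U)$ is uniform on the set
\[
T=\{(x,u):x\in A,\ u\in B\circ x\circ C\},\qquad |T|=\sum_{x\in A}|B\circ x\circ C|.
\]
Since $\HH(X)+\HH(U)\ge\HH(X,U)=\log|T|$, the theorem gives
\[
\log|T|\ \le\ \HH(X)+\HH(U)\ \le\ \log|B\circ A|+\log|A\circ C|.
\]
We use subadditivity here rather than claiming $\HH(U)$ is large. The right-side bounds still hold, since they depend only on the supports $A,B,C$ and not on $p_X$ being uniform. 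This gives \eqref{eq:nonabelian-sumset1} directly.

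The step to verify carefully is exactly this: that Theorem~\ref{thm:nonabelian-coupling} allows a non-uniform $p_X$ (it does), and that we never need $\HH(X)=\log|A|$.

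\textbf{Second inequality.} \eqref{eq:nonabelian-sumset2} follows from \eqref{eq:nonabelian-sumset1} because $|A|\min_{x\in A}|B\circ x\circ C|\le\sum_{x\in A}|B\circ x\circ C|$.
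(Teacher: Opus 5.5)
Your final argument is correct and is essentially the paper's proof. The paper also takes $p_X(x)=|B\circ x\circ C|/N$ with $U$ conditionally uniform on $B\circ x\circ C$, writes $\log N=\HH(X)+\HH(U\mid X)\le \HH(X)+\HH(U)$ (your $\HH(X,U)=\log|T|$ step), and applies Theorem~\ref{thm:nonabelian-coupling} with the support bounds $\log|B\circ A|$ and $\log|A\circ C|$, so your abandoned uniform-$p_X$ attempt can simply be dropped.
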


\begin{proof}
For $x\in A$, put $n_x:=|B\circ x\circ C|$ and
$N:=\sum_{x\in A}n_x$.
Consider the distribution $p_X(x) = \frac{n_x}{N}$ defined over $A$. For every $x\in A$ and every
$u\in B\circ x\circ C$, fix a representation
\[
u=b_x(u)\circ x\circ c_x(u),
\qquad b_x(u)\in B,\quad c_x(u)\in C.
\]
Conditionally on $X=x$, let $U$ be uniform on $B\circ x\circ C$,
and set $Y=b_x(U)$ and $Z=c_x(U)$. Then
$U=Y\circ X\circ Z$ almost surely. The construction gives
a compatible coupling for Theorem~\ref{thm:nonabelian-coupling}, with 
\[
\HH(U\mid X=x)=\log n_x
\]
for every $x\in A$. Consequently,
\[
\begin{aligned}
\log N
&=\sum_{x\in A}\frac{n_x}{N}
\left(\log n_x-\log\frac{n_x}{N}\right)\\
&=\HH(U\mid X)+\HH(X)\\
&\le \HH(U)+\HH(X)\\
&\le
\max_{\Pi(p_X,p_Y)}\HH(Y\circ X)
+\max_{\Pi(p_X,p_Z)}\HH(X\circ Z)\\
&\le \log|B\circ A|+\log|A\circ C|.
\end{aligned}
\]
The penultimate inequality follows from
Theorem~\ref{thm:nonabelian-coupling}, and the last follows because the products are supported on $B\circ A$ and $A\circ C$, respectively.
Exponentiating gives \eqref{eq:nonabelian-sumset1}.
The minimum bound \eqref{eq:nonabelian-sumset2} follows by bounding each summand below by the minimum.
\end{proof}

\begin{remark}
This generalizes Ruzsa's sumset triangle inequality
(Theorem~\ref{thm:ruzsa}): in an abelian group, $|B+x+C|=|B+C|$ for
every $x\in A$, so \eqref{eq:nonabelian-sumset1} reduces to
$|A||B+C|\le |A+B||A+C|$.
\end{remark}

\section*{Conclusion}
In this work, we derive and extend some known inequalities in additive combinatorics using information-theoretic arguments. This adds to a growing body of work linking the two fields. 
\bibliographystyle{IEEEtran}
\bibliography{mylib}

\appendix
\numberwithin{theorem}{section}
\numberwithin{definition}{section}
\section{Alternative proof of Theorem~\ref{thm:magnif}} \label{apdx:magnif}

We give an alternative proof of Theorem ~\ref{thm:magnif} based on max-flow min-cut theorem from network flow theory. 

\begin{definition}[Flow network]
A flow network is a finite directed graph \(G=(V,E)\), together
with distinct vertices \(s,t\in V\), called the source and
sink, and a capacity function \(c:E\to\mathbb{R}_{\geq 0}\cup\{\infty\}\).
For \(A\subseteq V\), write
\[
\begin{aligned}
\delta^+(A)&=\{(u,v)\in E:u\in A,\ v\notin A\},\\
\delta^-(A)&=\{(u,v)\in E:u\notin A,\ v\in A\}.
\end{aligned}
\]
These are the edges leaving and entering \(A\), respectively.
For a single vertex \(v\), abbreviate
\(\delta^\pm(\{v\})\) by \(\delta^\pm(v)\).
\end{definition}

\begin{definition}[Flow and its value]
A feasible \(s\)--\(t\) flow is a function
\(f:E\to\mathbb{R}_{\geq 0}\) satisfying the capacity constraints
\[
0\leq f(e)\leq c(e)\qquad\text{for every }e\in E,
\]
and flow conservation at every vertex other than the source and sink:
\[
\sum_{e\in\delta^+(v)}f(e)
=\sum_{e\in\delta^-(v)}f(e)
\qquad\text{for every }v\in V\setminus\{s,t\}.
\]
The value of \(f\) is its outflow from the source:
\[
|f|
=\sum_{e\in\delta^+(s)}f(e)
=\sum_{e\in\delta^-(t)}f(e).
\]
The second equality follows from flow conservation.
\end{definition}

\begin{definition}[Cut and its capacity]
An \(s\)--\(t\) cut is a partition \((S,T)\) of \(V\) with
\(s\in S\) and \(t\in T\).
Its capacity $c(S,T)$ is
\[
c(S,T)=\sum_{e\in\delta^+(S)}c(e).
\]
\end{definition}
For every feasible flow \(f\) and every such cut $(S,T)$, conservation gives
\[
|f|
=\sum_{e\in\delta^+(S)}f(e) - \sum_{e \in \delta^-(S)} f(e)
\leq c(S,T).
\]

\begin{theorem}[max-flow min-cut theorem \cite{ff56}]
\label{thm:maxf-minc}
In every flow network $(G, s,t,c)$ as defined above, the maximum value of a feasible
\(s\)--\(t\) flow equals the minimum capacity of an \(s\)--\(t\) cut, if the latter is finite:
\[
\max_{\substack{f\in \mathcal F}} |f|
=\min_{\substack{S\subseteq V\\s\in S,\ t\notin S}}
c(S,V\setminus S),
\]
where $\mathcal F$ is the set of feasible $s$--$t$ flows. Both the maximum and the minimum are attained.
\end{theorem}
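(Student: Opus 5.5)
We prove the theorem in three steps: weak duality, existence of a maximum flow, and the augmenting-path (residual graph) argument producing a cut whose capacity equals that flow's value. Weak duality is already shown just before the statement: $|f|\le c(S,T)$ for every feasible flow $f$ and every $s$--$t$ cut $(S,T)$. There are finitely many cuts, so the minimum cut capacity $C:=\min c(S,V\setminus S)$ is attained, and by hypothesis $C<\infty$. It therefore suffices to exhibit a feasible flow $f$ together with a cut of capacity exactly $|f|$.

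\emph{Step 1: a maximum flow exists.} Infinite capacities can make the feasible set unbounded, for instance through circulations on cycles of infinite-capacity edges. To avoid this, define truncated capacities $c'(e)=\min\{c(e),M\}$ with $M:=C+1$. The set of flows feasible for $c'$ is a nonempty closed, bounded polytope, since the zero flow lies in it and it is cut out by finitely many linear constraints inside $[0,M]^E$. The value $|f|$ is linear, so some $f^\star$ maximizes $|f|$ over this set. Because $c'\le c$, $f^\star$ is also feasible for the original network. Weak duality gives $|f^\star|\le C<M$.

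\emph{Step 2: residual reachability.} Form the residual graph of $f^\star$ with respect to $c'$. It contains a forward arc $u\to v$ for each $(u,v)\in E$ with $f^\star(u,v)<c'(u,v)$, and a backward arc $v\to u$ for each $(u,v)\in E$ with $f^\star(u,v)>0$. Let $S$ be the set of vertices reachable from $s$ in this graph.

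\begin{itemize}
\item Suppose $t\in S$, and take a simple residual $s$--$t$ path. Let $\varepsilon>0$ be the minimum slack along it. Pushing $\varepsilon$ forward on forward arcs and removing $\varepsilon$ on backward arcs keeps the flow feasible for $c'$. At each internal vertex the path enters once and leaves once, and a short case check on arc directions shows conservation is preserved. The value increases by $\varepsilon$, contradicting maximality of $f^\star$. Hence $t\notin S$, and $(S,V\setminus S)$ is an $s$--$t$ cut.
\item By the definition of $S$, every edge in $\delta^+(S)$ satisfies $f^\star(e)=c'(e)$, and every edge in $\delta^-(S)$ carries zero flow.
\end{itemize}

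\emph{Step 3: removing the truncation.} Using the flow-across-a-cut identity already recorded before the theorem, we get
\[
|f^\star|=\sum_{e\in\delta^+(S)}f^\star(e)-\sum_{e\in\delta^-(S)}f^\star(e)=\sum_{e\in\delta^+(S)}c'(e).
\]
If some edge in $\delta^+(S)$ had $c(e)=\infty$, then $c'(e)=M$, and the right-hand side would be at least $M>C\ge|f^\star|$, a contradiction. So every edge leaving $S$ has finite capacity, and $c'(e)=c(e)$ on $\delta^+(S)$. This gives $|f^\star|=c(S,V\setminus S)\ge C$.

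Combining this with weak duality, $|f^\star|=C=c(S,V\setminus S)$. The flow $f^\star$ is feasible in the original network and, by weak duality, has maximum value there. Both the maximum and the minimum are attained, which proves the theorem.

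The main obstacle is the possibility of infinite capacities, which affects both the compactness argument in Step 1 and the saturation argument in Step 3. The truncation at $M=C+1$ handles both at once: it makes the feasible set compact, and it ensures that no infinite-capacity edge can cross the cut produced by the residual argument.
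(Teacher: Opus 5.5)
The paper gives no proof of this theorem; it quotes it from Ford--Fulkerson \cite{ff56} and uses it as a black box in Lemma~\ref{lem:magnif}. You instead prove it directly, with the standard residual-graph (augmenting-path) argument, a compactness step for existence of a maximum flow, and the truncation $c'=\min\{c,M\}$ with $M=C+1$. The truncation handles the infinite capacities that the paper's network really uses on the $A\to B$ edges. Your argument is correct up to one small slip and makes the appendix self-contained. It also shows that the infinite-capacity case reduces to the finite one. In Lemma~\ref{lem:magnif} one could equally give the $A\to B$ edges capacity $|A|+1$, since any cut crossing such an edge already costs more than the cut $\{s\}$.

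\textbf{The slip.} In Step~3 you exclude only $c(e)=\infty$ on $\delta^+(S)$. But $c'(e)=c(e)$ also needs $c(e)\le M$, because a finite capacity above $M$ is truncated as well. The same inequality fixes this: any $e\in\delta^+(S)$ with $c(e)>M$ has $c'(e)=M$, which forces $|f^\star|\ge M>C$. Alternatively, truncate only the infinite capacities.

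\textbf{Net versus gross value.} The gain of $\varepsilon$ in Step~2 and the cut identity in Step~3 hold for the \emph{net} outflow $\sum_{e\in\delta^+(s)}f(e)-\sum_{e\in\delta^-(s)}f(e)$. The paper instead writes the gross outflow $\sum_{e\in\delta^+(s)}f(e)$. If the augmenting path starts with a backward arc on an edge entering $s$, the gross outflow does not increase. The two notions agree when no edge enters $s$, as in $s\to A\to B\to t$. In general, though, the statement with the gross definition is false. Take edges $s\to v$ and $v\to s$ of infinite capacity and $v\to t$ of capacity $1$: the minimum cut is $1$, but the gross outflow is unbounded. So state explicitly that $|f|$ means net outflow, or assume $\delta^-(s)=\emptyset$.
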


We use the following lemma.
\begin{lemma} \label{lem:magnif}
Let \(G=(A,B,E)\) be a nonempty bipartite graph without any isolated poiint in $A$, and magnification ratio
\[
\mu=\mu_{A\to B}(G):=\min_{\emptyset\neq S\subseteq A}\frac{|N(S)|}{|S|},
\]
where $N(S) \subseteq B$ is the
neighborhood of $S$. Let $\mathcal W(G)$ be the set of channels from $A$ to $B$
whose transition probabilities vanish outside $E$.
Then there exists a universal channel
$W_{Y\mid X}\in\mathcal W(G)$ such that, for every input
distribution $p_X$ on $A$,
\[
\HH(Y)-\HH(X)\ge\log\mu,
\]
where $p_Y=W_{Y\mid X}\circ p_X $ is the induced output distribution.
\end{lemma}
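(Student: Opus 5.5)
The plan is to build the universal channel from a fractional ``$\mu$-to-one matching'' obtained by max-flow min-cut (Theorem~\ref{thm:maxf-minc}), and then verify the entropy bound by a convexity argument that does not depend on $p_X$. The matching we want is a set of weights $w(a,b)\ge 0$, supported on $E$, with
\[
\sum_{b} w(a,b)=\mu \quad\text{for every } a\in A, \qquad \sum_{a} w(a,b)\le 1 \quad\text{for every } b\in B.
\]
Given such weights, we define $W_{Y\mid X}(b\mid a):=w(a,b)/\mu$. This is a valid channel in $\mathcal W(G)$. Note that $\mu>0$, because $A$ has no isolated points.

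To produce $w$, build the network with vertex set $\{s\}\cup A\cup B\cup\{t\}$ and three kinds of edges:
\begin{itemize}
\item an edge $s\to a$ of capacity $\mu$ for each $a\in A$;
\item an edge $a\to b$ of capacity $\infty$ for each $(a,b)\in E$;
\item an edge $b\to t$ of capacity $1$ for each $b\in B$.
\end{itemize}
Every cut has finite capacity at least as large as that of some cut avoiding the infinite edges. Such a cut is described by the set $T\subseteq A$ of vertices kept on the source side. Since no infinite edge may leave the source side, all of $N(T)$ must also lie on the source side, and it then costs at most $\mu|A\setminus T|+|N(T)|$. By the definition of $\mu$ we have $|N(T)|\ge\mu|T|$, so every cut has capacity at least $\mu|A|$. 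The cut $\{s\}$ attains exactly $\mu|A|$. Hence the maximum flow has value $\mu|A|$ and saturates every source edge. Taking $w(a,b)=f(a\to b)$, conservation at each $a$ gives $\sum_b w(a,b)=\mu$, and the capacity of each edge $b\to t$ gives $\sum_a w(a,b)\le 1$. This cut computation is the step I expect to need the most care: one must check that cuts containing infinite edges can be discarded, and that the minimizing structure really is $T$ together with $N(T)$.

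For the entropy bound, fix any $p_X$. Then
\[
p_Y(b)=\sum_a w(a,b)\,\frac{p_X(a)}{\mu},
\]
which is a sub-convex combination of the numbers $p_X(a)/\mu$, since the coefficients satisfy $\sum_a w(a,b)\le1$. The function $\varphi(t)=t\log t$ is convex with $\varphi(0)=0$, so Jensen's inequality applied to sub-convex combinations gives
\[
\varphi\bigl(p_Y(b)\bigr)\le \sum_a w(a,b)\,\varphi\bigl(p_X(a)/\mu\bigr).
\]
Summing over $b$ and using $\sum_b w(a,b)=\mu$ yields
\[
-\HH(Y)\le \sum_a p_X(a)\log\frac{p_X(a)}{\mu}=-\HH(X)-\log\mu.
\]
This is exactly $\HH(Y)-\HH(X)\ge\log\mu$, and it holds for every $p_X$ with the same channel, which is the required universality.
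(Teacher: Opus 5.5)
Your proof is correct. The flow half is the paper's construction. The paper uses capacities $1,\infty,1/\mu$ instead of your $\mu,\infty,1$, which is the same network rescaled, and it produces the same channel: row sums $1$, column sums at most $1/\mu$.

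There is one wording slip in the cut computation. A cut whose source side meets $A$ in $T$ costs \emph{at least} $\mu|A\setminus T|+|N(T)|$, with equality once the $B$-vertices outside $N(T)$ are moved to the sink side. You wrote ``at most'', but the lower bound is what your next sentence needs.

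The genuine difference is in the entropy step. The paper introduces the reference output $r_Y=W\circ u_A$ and applies the data-processing inequality $D(q_Y\|r_Y)\le D(p_X\|u_A)$. It then uses the column bound in the form $r_Y(y)\le 1/(|A|\mu)$ to pass from $r_Y$ to $u_B$, so $|A|$ and $|B|$ enter and then cancel. You instead apply Jensen's inequality to $t\log t$ at each output symbol $b$, using the sub-stochastic weights $w(\cdot,b)$, and then sum using the row sums.

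Both arguments rest on the same convexity, since data processing for relative entropy is itself a consequence of the log-sum inequality. Yours is shorter, needs no reference distributions, and shows plainly that only the row and column constraints on $W$ are used. It also goes through unchanged for any convex $\varphi$ with $\varphi(0)=0$, so it gives the same $\log\mu$ gap for R\'enyi entropies. The paper's version instead states the result as a contraction of divergence from the uniform distribution, which matches the information-theoretic language of the rest of the paper.
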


\begin{proof}
Introduce an auxiliary source $s$ and sink $t$, and construct the flow network \(s\to A\to B\to t\) with capacities
\[
c(s,x)=1,\qquad c(x,y)=+\infty\ \ ((x,y)\in E),\qquad c(y,t)=\frac1\mu.
\]
For any finite \(s\)-\(t\) cut, let \(A'\subseteq A\) be the vertices of \(A\) on the source side. To avoid an infinite-capacity edge crossing the cut, every neighbor of \(A'\) must also lie on the source side. Hence, the cut capacity is at least
\[
|A\setminus A'|+\frac{|N(A')|}{\mu}
\ge |A\setminus A'|+|A'|
=|A|,
\]
using \(|N(A')|\ge \mu|A'|\). The cut \(\{s\}\) has capacity \(|A|\), so the min-cut is \(|A|\). By Theorem~\ref{thm:maxf-minc}, there is a flow $f$ of value \(|A|\), where all edges \((s,x)\) are saturated. For $(x, y)\in A\times B$, define
\[
W_{Y\mid X}(y|x):=f(x,y).
\]
Then \(W_{Y\mid X}(y|x)=0\) for \((x,y)\notin E\), \(\sum_y W_{Y\mid X}(y|x)=1\), and
\[
\sum_x W_{Y\mid X}(y|x)\le \frac1\mu.
\]

Fix any $p_X$ on $A$. Let $u_A,u_B$ denote the uniform distributions on $A,B$, respectively, and put
$q_Y=W_{Y\mid X}\circ p_X$ and $r_Y=W_{Y\mid X}\circ u_A$.
By the data-processing inequality for relative entropy,
\[
D(q_Y\|r_Y)\le D(p_X\|u_A).
\]
Since \(\sum_x W(y|x)\le 1/\mu\),
\[
r_Y(y)\le \frac1{|A|\mu},
\]
so
\[
\frac{r_Y(y)}{u_B(y)}\le \frac{|B|}{|A|\mu}.
\]
Therefore
\[
\begin{aligned}
D(q_Y\|u_B)
&=D(q_Y\|r_Y)+\sum_y q_Y(y)\log\frac{r_Y(y)}{u_B(y)}\\
&\le D(p_X\|u_A)+\log\frac{|B|}{|A|\mu}.
\end{aligned}
\]
Using
\[
D(p_X\|u_A)=\log|A|-\HH(X),\qquad
D(q_Y\|u_B)=\log|B|-\HH(Y),
\]
we obtain
\[
\log|B|-\HH(Y)
\le \log|A|-\HH(X)+\log\frac{|B|}{|A|\mu}.
\]
Canceling terms gives
\[
\HH(Y)-\HH(X)\ge \log\mu.
\]
Thus the same channel \(W_{Y\mid X}\) works for every \(p_X\).
\end{proof}

\begin{proof}[Proof of Theorem~\ref{thm:magnif}]
     In a slight abuse of notation, to account for the additive nature of the settings considered earlier, we assume that the channel output is $X+Y$ and the input is $X$. Here $X$ takes values in $A$ and $Y$ takes values in $D$. Choose $S^*\in\arg\min_{\emptyset\ne S\subseteq A} |S+D|/|S|$ and put $p_X=\unif_{S^*}$. Then, for every channel $W_{Y|X}$ with output alphabet $D$, $\HH(X+Y)\le\log|S^*+D|$, and
    \[\min_{p_X} \max_{W_{Y|X}} \bigl[\HH(X+Y) - \HH(X)\bigr] \le \log \frac{|S^* + D|}{|S^*|} = \log \mu(A, D). \]

    For the reverse direction, setting $B = A + D$, and define the graph $G$ by adding edges between pairs $(a, a+d)$ for $a \in A$ and $d \in D$, Lemma~\ref{lem:magnif} asserts that for every $p_X$ supported on $A$,
\[\max_{W_{Y|X}} \bigl[\HH(X+Y) - \HH(X)\bigr] \geq \log \mu(A, D).\]
Taking the minimum over $p_X$ completes the proof.
\end{proof}
\begin{remark}
    While Prof. Venkat Anantharam was recently visiting CUHK, we discussed our research directions as well as the characterization of the magnification ratio using an entropic functional with him. He conveyed to us that Gemini gave him an alternate proof of the result using the max-flow min-cut theorem. While no further details were exchanged, we would still like to acknowledge this discussion.
\end{remark}

\section*{Bibliographic remarks and the use of AI}

These investigations originated as a research programme between Chandra Nair and Ken Lau (who has since graduated), and has now evolved into a joint research programme pursued by Chandra Nair, Aref Namayandeh, and Zhaobang Zhu. The original motivation was to employ the ideas from additive combinatorics to aid in establishing the Gaussian optimality for certain information functionals arising in the scalar Gaussian Z-interference channel. The core mathematical objectives, connections to Pl\"unnecke–Ruzsa theory, and the overarching strategy for translating information inequalities into additive-combinatorial statements were entirely human-conceived. 

Nair provided foundational prior work—including papers by Lau–Nair and Lau–Nair–Zhu—as context for an extended dialogue with an AI assistant (accessed via Perplexity). This assistant aided in exploring candidate proof strategies, deriving optimality conditions, implementing numerical experiments, and drafting working notes and manuscript revisions. As part of this iterative process, numerous proposed arguments failed, and several mathematical claims and numerical outputs required correction or rejection. In parallel, Namayandeh and Zhu conducted extensive independent numerical investigations into feasible couplings, generating evidence that propelled the search for a direct proof.

A separate AI system, suggested the Markov recombination construction, for a related version, used in the final proof. We explicitly acknowledge this substantive and elegant observation. Consequently, the final argument leverages recombination and maximality directly, bypassing the exploratory KKT calculations. The authors have written the entire article, with minimal stylistic help for polishing and notation from AI.

\end{document}